\theoremstyle{plain}
\newtheorem*{theorem}{Theorem:}
\newtheorem*{proof}{Proof:}
\newtheorem{rel}{Relation}
\newcommand{\vect}[1]{\textbf{#1}}
\begin{document}
\title{A general penalty method for density-to-potential inversion}
\author{Ashish Kumar}  \email[]{ashishkr@iitk.ac.in}
\affiliation{Department of Physics, Indian Institute of Technology Kanpur, Kanpur-208016, India}
\author{Manoj K. Harbola}  \email[]{mkh@iitk.ac.in}
\affiliation{Department of Physics, Indian Institute of Technology Kanpur, Kanpur-208016, India}
\date{\today}
\begin{abstract}
A general penalty method is presented  for the construction of Kohn-Sham system for given density through Levy's constrained-search. The method uses a functional $S[\rho]$ of one's choice. Different forms of $S[\rho]$ are  employed  to calculate the kinetic energy and exchange-correlation potential of atoms, jellium spheres, and the Hookium and  consistency among results obtained from them is shown for each system.
\end{abstract}
\maketitle
\section{Introduction}\label{sec5.1}
With increasing accuracy of density functional theory (DFT) \cite{Hohn, Kohn_1965,Becke_1988, Perdew_PRL.77.3865,Jianmin_PRL.91.146401, Sun_PRL.115.036402, Patra_2019} calculations, it is imperative that exact results \cite{Parr_jcp_1978,PPLB,LPS_1984,UvBarth_1985,Levy_Perdew_1985} also be made available wherever possible. This is important both from a fundamental point of view as well as for gaining insights into the working of functionals employed to perform DFT calculations. Majority of DFT calculations are performed within its Kohn-Sham (KSDFT) formalism \cite{Kohn_1965}. The key ingredient of a KSDFT calculation is the exchange-correlation energy functional $E_{xc}[\rho]$, where $\rho(\vect{r})$ is a given density, that incorporates all many-body effects in it; the functional derivative of $E_{xc}[\rho]$ with respect to the density gives the exchange-correlation potential $v_{xc}(\vect{r}) = \frac{\delta E_{xc}[\rho]}{\delta \rho(\vect{r})}$ used in the Kohn-Sham equation (atomic units used throughout the paper)
\begin{equation}\label{eq5.1.1}
\Big[ -\frac{1}{2} \nabla^2 + v_{ext}(\vect{r}) +v_H(\vect{r}) +v_{xc}(\vect{r})\Big]\phi_i(\vect{r}) = \epsilon_i\phi_i(\vect{r}).
\end{equation}
Here $ v_{ext}(\vect{r}) +v_H(\vect{r}) +v_{xc}(\vect{r})$ is known as the Kohn-Sham potential. Orbitals $\{ \phi_i(\vect{r})\}$ obtained from the Kohn-Sham equation give the ground state density
\begin{equation}\label{eq5.1.2}
\rho(\vect{r}) = \sum_{i=1}^{N} f_i| \phi_i(\vect{r})|^2,
\end{equation}
where $\{f_i\}$ are the electron occupation number of orbitals $\{i\}$ and $N$ denotes the uppermost filled orbital. In the equation above $v_{ext}(\vect{r})$ is the external potential in which electrons move, and
\begin{equation}\label{eq5.1.3}
v_H(\vect{r}) = \int \frac{\rho(\vect{r}')}{|\vect{r}-\vect{r}'|}d\vect{r}'
\end{equation}
is the Hartree potential.  The total energy of the system is given as
\begin{equation}\label{eq5.1.4}
\begin{split}
E[\rho] = \sum_{i=1}^{N} f_i \langle \phi_i| -\frac{1}{2}\nabla^2 |\phi_i \rangle
&+ \int v_{ext}(\vect{r}) \rho(\vect{r})d \vect{r} \\
&+ \frac{1}{2}\iint \frac{\rho(\vect{r})\rho(\vect{r}')}{|\vect{r}-\vect{r}'|}d\vect{r}'d\vect{r}' +E_{xc}[\rho].
\end{split}
\end{equation}
As is clear from the description above, in carrying out  Kohn-Sham DFT calculations, $E_{xc}[\rho]$ functional,  and therefore its derivative $v_{xc}[\rho](\vect{r})$, are the only ingredients that are approximated; everything else is known exactly. Therefore to get the exact $v_{xc}(\vect{r})$  for a given density, some other method has to be employed. As noted earlier, knowing $v_{xc}(\vect{r})$ exactly is of interest by itself and is also important to provide insights \cite{Gritsenko_1996, Ruzsinszky_jcp_2006,Teal2,Makmal_2011, Wagner_2012,2014_Gould,Godby_PRA_2016, Rabeet_2017,Hodgson_interatomic_2017,Staroverov_PNAS_18, kummel_18, 2019_Gould,kumar2019accurate} into the working of approximate exchange-correlation functionals $E_{xc}[\rho]$. To do this several inversion schemes  have been proposed \cite{Werden, Stott_1988, Gorling_1992,Wang_1993, Zhao_1993, Zhao_1994, Wang_1993, Vlb_1994,Schipper1997, WY2,Peirs_2003, Stott_2004,Hollins_2017,Wasserman_2017,Finzel2018,Kumar_2019,kanungo_zimmerman_gavini_2019,Gidopoulos_jcp_2020}.  Most of them employ optimization approaches based on fundamental principles of DFT \cite{ Hohn, Levy_1979, Lieb_1983}. These approaches either minimize the non-interacting kinetic energy $T_S[\rho] = \sum_{i=1}^{N} f_i \langle \phi_i| -\frac{1}{2}\nabla^2 |\phi_i \rangle$ of electrons with the constraint that the corresponding orbitals lead to the given density $\rho_0(\vect{r})$ \cite{Levy_1979, Zhao_1993, Zhao_1994} or maximize the functional given in Eq. (\ref{eq5.1.6}) by varying the Kohn-Sham potential \cite{Lieb_1983,WY2,Kumar_2019}. The latter approach does not involve imposition of any constraint.
\par Minimization of the kinetic energy with constraint  can be carried out as minimization of the functional
\begin{equation}\label{eq5.1.5}
J_{\rho_0,v}[\rho] = T_S[\rho] + \int v(\vect{r}) (\rho(\vect{r})-\rho_0(\vect{r}))d \vect{r},
\end{equation}
with respect to $ \{ \phi_i(\vect{r})\}$, where  $\rho(\vect{r})$ is given by Eq. (\ref{eq5.1.2})  and $v(\vect{r})$ is the Lagrange multiplier to enforce the condition $\rho(\vect{r}) = \rho_0(\vect{r})$.  This leads to  the equation
\begin{equation}\label{eq5.1.5b}
\Big[ -\frac{1}{2} \nabla^2 + v(\vect{r}) \Big]\phi_i(\vect{r}) = \epsilon_i\phi_i(\vect{r}),
\end{equation}
for  $ \{ \phi_i(\vect{r})\}$ where $ \{\epsilon_i \}$ are  the  Lagrange multipliers corresponding to the orbital normalization. This shows that Lagrange multiplier function $v(\vect{r})$, which enforces density constraint, is the Kohn-Sham potential $v_{ext}(\vect{r}) +v_H(\vect{r}) +v_{xc}(\vect{r})$  (see Eq. (\ref{Fig5.5.1})) and leads to the exchange-correlation potential for the density $\rho_0(\vect{r})$.
\par A practical approach to carry out minimization of $ T_S[\rho]$ is the penalty method. As we will see later, in this method $v(\vect{r})$ is expressed as a function of $\rho(\vect{r})$ so that $v(\vect{r}) = v[\rho](\vect{r})$, $J_{\rho_0,v}[\rho] = J_{\rho_0}[\rho]$ ,   and $\rho(\vect{r})$ is varied to minimze $J_{\rho_0}[\rho]$, which also gaurantees that $\rho(\vect{r}) = \rho_0(\vect{r})$.  This is explained in detail in  Appendix (\ref{sec5a1}) . 
\par In the second approach, unconstrained maximization of the functional
\begin{equation}\label{eq5.1.6}
J_{\rho_0}[v]  = \sum_{i=1}^{N} f_i \langle \phi_i[v] | -\frac{1}{2}\nabla^2 |\phi_i [v] \rangle
+ \int v(\vect{r}) (\rho [v](\vect{r})-\rho_0(\vect{r}))d \vect{r}
\end{equation}
with respect to $v(\vect{r})$ is carried out, with $\phi_i (\vect{r})$ being the solution of Eq. (\ref{eq5.1.5b}). In this method  Eq. (\ref{eq5.1.5b}) is solved and $v(\vect{r})$ is varied until the quantity of Eq. (\ref{eq5.1.6}) attains its maximum. Important connection between the methods described by Eq. (\ref{eq5.1.5}) and Eq. (\ref{eq5.1.6}) is that the  same functional is employed to search for $v(\vect{r})$. However in Eq. (\ref{eq5.1.5}) $v(\vect{r})$ is expressed as a functional of the density and  minimization is carried out with respect to $\rho(\vect{r})$ whereas in Eq. (\ref{eq5.1.6}), the density is expressed in terms of  $v(\vect{r})$ and  maximization is carried out with respect to $v(\vect{r})$. While minimization of Eq. (\ref{eq5.1.5}) is done using the Zhao-Morrison-Parr (ZMP) method (described below) in  majority of cases,  several different approaches have been proposed for maximization of the functional in  Eq. (\ref{eq5.1.6})  .
\par The purpose of this work is to provide a general connection between the minimization method of Eq. (\ref{eq5.1.5}) and different methods proposed for performing unconstrained maximization of the functional $J_{\rho_0}[v]$ of Eq. (\ref{eq5.1.6}). We show that for  each one of the latter methods, there is a corresponding method employing Eq. (\ref{eq5.1.5}).  This connection makes use of a general condition \cite{Kumar_2019} derived   recently, and given in Eq. (\ref{eq5.4.2}) later, in the context of showing the universal nature of different methods of density-to-potential inversion.
\par In the following we first outline the definition of the universal functional (see Eq. (\ref{eq5.2.2}) below) of density functional theory and discuss how this is used to obtain the exchange-correlation potential of Kohn-Sham density functional theory through Eq. (\ref{eq5.1.5}) and Eq. (\ref{eq5.1.6}). In particular we describe the ZMP method for minimizing the Kohn-Sham kinetic energy with constraint and several methods and their universality for maximizing the functional of Eq. (\ref{eq5.1.6}). Based on the latter, we then show that the ZMP method also has a general nature and several functionals other than those proposed by  ZMP can be equally effective for obtaining the Kohn-Sham exchange-correlation potential. This then brings forth and demonstrates the conjugate relationship (see section \ref{subsec_cojugate} below) between density $\rho (\vect{r})$ and external potential $v_{ext}(\vect{r})$ used in Levy's \cite{ Levy_1979} and Lieb's \cite{Lieb_1983} definition of the universal functional of DFT. In addition, it provides several options for the functional of ones choice  to be used in Eq. (\ref{eq5.1.5}) to calculate the exchange-correlation potential by minimizing the non-interacting Kinetic enenrgy functional.
\section{Universal functional $F[\rho]$ of DFT and generating exchange-correlation potential for a given density}\label{sec5.2}
In DFT , the ground-state energy $E[\rho]$ as a functional of ground state density $\rho(\vect{r})$ is written as
\begin{equation}\label{eq5.2.1}
E[\rho] =\int v_{ext}(\vect{r})\rho(\vect{r})d\vect{r} + F[\rho] ,
\end{equation}
where $v_{ext}(\vect{r})$ is the external potential electrons are moving in and $F[\rho]$ is a universal functional of the density. For a given density, this functional is given as \cite{Levy_1979}
\begin{equation}\label{eq5.2.2}
F[\rho]  = \underset{\Psi \to \rho}{min}\langle \Psi |T +V_{ee}| \Psi\rangle,
\end{equation}
where $T$ and $V_{ee}$ are the kinetic and the electron-electron interaction enenrgy operators, respectively. Here the minimization is done over all those wavefunctions $\Psi$  that are  antisymmetric with respect to exchange of electron coordinates and give density $\rho(\vect{r})$.  Hence the definition given by Eq. (\ref{eq5.2.2}) and corresponding search of $\Psi$ is known as the constrained-search approach. For the corresponding Kohn-Sham system, the universal functional is
\begin{equation}\label{eq5.2.3}
F_{KS}[\rho]  = \underset{\Phi \to \rho}{min} \langle \Phi |T| \Phi\rangle,
\end{equation}
where now constrained search is over Slater determinants $\Phi$ made of $N$ orbitals $\{ \phi_i\}$ for $N$ electrons that give the corresponding $\rho(\vect{r})$. The constraint that  $\sum_{i=1}^{N} f_i |\phi_i(\vect{r})|^2 =  \rho(\vect{r})$ can be implemented through penalty method given in Appendix (\ref{sec5a1}). Using this approach ZMP formulated practical scheme for calculation of exchange-correlation potential as described in the  following.
\subsection{Zhao-Morrison-Parr scheme} \label{subsec5.2.1}
Zhao-Morrison-Parr proposed \cite{Zhao_1993, Zhao_1994} that the condition that $\Phi$ lead to the given density $\rho_0(\vect{r})$ can be implemented by demanding that
\begin{equation}\label{eq5.2.1.1}
\frac{1}{2} \iint \frac{ \{\rho (\vect{r})  -\rho_0(\vect{r}) \}\{\rho (\vect{r}') -\rho_0(\vect{r}')\}}{|\vect{r}-\vect{r}'|}d\vect{r}d\vect{r}' = 0.
\end{equation}
Here  $ \rho(\vect{r})  =\sum_{i=1}^{N} f_i |\phi_i(\vect{r})|^2 $. Note that the condition above implies \cite{jackson_ED,Griffiths_ED} that $\rho(\vect{r}) =\rho_0(\vect{r})$. To get the KS exchange-correlation potential , one performs unconstrained minimization of the functional
\begin{equation}\label{eq5.2.1.2}
\begin{split}
\sum_{i=1}^{N} f_i \langle \phi_i| -\frac{1}{2}\nabla^2 |\phi_i \rangle
 + \frac{\lambda}{2}\iint \frac{ \{\rho (\vect{r})  -\rho_0(\vect{r}) \}\{\rho (\vect{r}') -\rho_0(\vect{r}')\}}{|\vect{r}-\vect{r}'|}d\vect{r}d\vect{r}',
\end{split}
\end{equation}
where $\lambda$ is a constant. This leads to the equation
\begin{equation}\label{eq5.2.1.3a}
\begin{split}
\Big[ -\frac{1}{2} \nabla^2 +\lambda\int \frac{\rho(\vect{r}') -\rho_0(\vect{r}')}{|\vect{r}-\vect{r}'|}d\vect{r}'\Big]\phi_i(\vect{r}) = \epsilon_i\phi_i(\vect{r}).
\end{split}
\end{equation}
Now separating out the external potential and the exact Hartree potential for the given density $\rho_0(\vect{r})$, Eq. (\ref{eq5.2.1.3a}) can be written as 
\begin{equation}\label{eq5.2.1.3}
\begin{split}
\Big[ -\frac{1}{2} \nabla^2 + v_{ext}(\vect{r}) +&\int \frac{\rho_0(\vect{r}')}{|\vect{r}-\vect{r}'|}d\vect{r}'   \\
+&\lambda\int \frac{\rho(\vect{r}') -\rho_0(\vect{r}')}{|\vect{r}-\vect{r}'|}d\vect{r}'\Big]\phi_i(\vect{r}) = \epsilon_i\phi_i(\vect{r}).
\end{split}
\end{equation}
The equation above is solved self-consistently and  in the limit of $\lambda \to \infty$ gives the exchange-correlation potential as
\begin{equation}\label{eq5.2.1.4}
v_{xc}(\vect{r})
 =\lim_{\lambda \to ~\infty} ~\lambda \int \frac{ \rho(\vect{r}') -\rho_0(\vect{r}')}{|\vect{r}-\vect{r}'|}d\vect{r}.
\end{equation}
This is known as the Zhao-Morrison-Parr method  and has been implemented \cite{morrison_1995,handy_1996,Jayatilaka_1998,handy_1999,parr_jcp_1999, PRA.69.042512,SAMAL2006217,Leonardo_2014,Rabeet_IJQC_2017} successfully over the years.
The method above is a penalty based method since $\lambda$ is the penalty parameter imposed if  the functional of Eq. (\ref{eq5.2.1.1})  is  non-zero. This is explained in detail in the Appendix. Note that  as required the penalty term 
\begin{equation*}
\iint \frac{\{\rho (\vect{r})  -\rho_0(\vect{r}) \}\{\rho (\vect{r}') -\rho_0(\vect{r}')\}}{|\vect{r}-\vect{r}'|}d\vect{r}d\vect{r}'
\end{equation*}
 is always positive. This is because the term represnts enenrgy of a charge distribution $\{\rho (\vect{r})  -\rho_0(\vect{r})\}$, and that enenrgy is always positive. Mathematically, it can be shown to be positive by using Poisson's equation \cite{Griffiths_ED,jackson_ED} satisfied by the Coulomb potential due to the charge density $ \{\rho (\vect{r})  -\rho_0(\vect{r}) \}$.
\par Having presented a minimization method, we now describe a different method that again uses the universal functional  $F[\rho]$ but in contrast to the ZMP method , it utilizes a maximization scheme.
\subsection{Wu and Yang method}\label{subsec5.2.2}
Zhao-Morrison-Parr method is based on constrained minimization and requires that $\lambda \to \infty$ limit be taken. Wu and Yang \cite{WY2} looked for a method that does not require a constraint condition and proposed that the exchange-correlation potential can be found directly by maximizing the  functional
 \begin{equation}\label{eq5.2.2.1}
J_{\rho_0}[v] =
\sum_{i=1}^{N} f_i \langle \phi_i [v]| -\frac{1}{2}\nabla^2 |\phi_i[v] \rangle
+ \int v(\vect{r}) (\rho[v](\vect{r})-\rho_0(\vect{r}))d \vect{r}
\end{equation}
with respect to $v(\vect{r})$. Here  $\{\phi_i (\vect{r})\}$ are the  solution of the equation
\begin{equation}\label{eq5.2.2.2}
\Big[ -\frac{1}{2} \nabla^2 + v(\vect{r}) \Big]\phi_i(\vect{r}) = \epsilon_i\phi_i(\vect{r}),
\end{equation}
$\rho_0(\vect{r})$ is the given density and $v(\vect{r}) =v_{ext}(\vect{r}) +v_{H}(\vect{r})+v_{xc}(\vect{r})$. In actual  calculations, $v_{ext}(\vect{r})$ and $v_{H}(\vect{r})$ are fixed and therefore $v_{xc}(\vect{r})$ is varied to achieve the maximum. The method has been applied to obtain both the exchange-correlation potential $v_{xc}(\vect{r})$ as well as the external potential \cite {Teal2,Rabeet_IJQC_2017} for scaled electron-electron interaction. Like the ZMP method, this approach too is related to finding the universal functional $F[\rho]$, defined by Lieb as
\begin{equation}\label{eq5.2.2.3}
F[\rho_0] =\underset{v(\vect{r})}{sup} ~~\Big[E[v]
- \int v(\vect{r}) \rho_0(\vect{r})d \vect{r}\Big],
\end{equation}
where $E[v]$ is the energy corresponding to $N$ electrons moving in potential $v(\vect{r})$. For the Kohn-Sham system this reduces to maximizing the functional given in Eq. (\ref{eq5.2.2.1}) with respect to $v(\vect{r})$  or equivalently maximizing
\begin{equation}\label{eq5.2.2.5}
\begin{split}
\sum_{i=1}^{N} f_i \langle \phi_i| -\frac{1}{2}\nabla^2 |\phi_i \rangle
+\int \Big\{ v_{ext}(\vect{r})  &+  \int \frac{\rho_0(\vect{r}')}{|\vect{r}-\vect{r}'|} d\vect{r}' \\ +& v_{xc}(\vect{r}) \Big \}(\rho(\vect{r})-\rho_0(\vect{r}))d \vect{r}
\end{split}
\end{equation}
with respect to $v_{xc}(\vect{r})$. Here $\{\phi_i(\vect{r})\}$ are the solution of the equation
\begin{equation}\label{eq5.2.2.7}
\Big[ -\frac{1}{2} \nabla^2 + v_{ext}(\vect{r}) + \int \frac{\rho_0(\vect{r}')}{|\vect{r}-\vect{r}'|}d\vect{r}'+v_{xc}(\vect{r})\Big]\phi_i(\vect{r}) = \epsilon_i\phi(\vect{r}).
\end{equation}
 To  facilitate calculations further, the Hartree term is used with the Fermi-Amaldi correction \cite{fermi1934orbite}. In many cases where maximization of expression in Eq. (\ref{eq5.2.2.5}) has been carried out, $v_{xc}(\vect{r})$ was expressed as a sum of Gaussian functions.
 \par We now combine the ideas from the  two methods presented above  to give a general penalty method. Before doing that in section \ref{sec5.4}, we first discuss in the following the relationship between the external potential $v_{ext}(\vect{r})$ and the density $\rho(\vect{r})$ in DFT and the associated Legendre transformation between the ground-state energy $E[v_{ext}]$ and the universal functional $F[\rho]$.
\subsection{Conjugate relationship between $\rho (\vect{r})$ and $ v_{ext}(\vect{r}) $ used in Levy's constrained search definition and Lieb's definition of the universal functional $F[\rho]$}\label{subsec_cojugate}
There is a relationship between the two ways that the universal functional has been defined by Eqs. (\ref{eq5.2.2}) and (\ref{eq5.2.2.3}) above.  This arises from the observation \cite{Nalewajski_jcp_1982, Kutzelnigg_JMs_2006} that the energy of a system and the universal functional are related through a Legendre transformation as follows.  Using the Hohenberg-Kohn theorem, the ground-state energy of a system of electrons in an external potential $v_{ext}(\vect{r})$ is given as
\begin{equation}
E[v_{ext}] = \underset{\rho(\vect{r})}{min} \Bigg\{F[\rho] + \int v_{ext}(\vect{r})\rho(\vect{r})d\vect{r} \Bigg\}
\end{equation}
where minimization is done with respect to the density $\rho(\vect{r})$, as indicated.  Notice that in the equation above  $v_{ext}(\vect{r})$ is a given external potential and remains fixed.   This combined with the variational principle for the energy in terms of the wavefunction leads \cite{Levy_1979} to the constrained search definition of the universal functional given by Eq. (\ref{eq5.2.2}). 
\par Now from the first-order perturbation theory
\begin{equation}
\frac{\delta E[v_{ext}]}{\delta v_{ext}(\vect{r})} = \rho(\vect{r})
\end{equation}
which shows that for a given number of electrons, the external potential $v_{ext}(\vect{r})$  and the density $\rho(\vect{r})$ are conjugate variables and we can write the universal functional as a Legendre transform of the energy as 
\begin{equation}
-F[\rho] =  \int v_{ext}(\vect{r})\rho(\vect{r})d\vect{r}  - E[v_{ext}]
\end{equation}
Now suppose the ground-state wavefunction corresponding to $v_{ext}(\vect{r})$  is $\Psi$.  Then we have
\begin{equation}
\begin{split}
F[\rho] &=\langle \Psi |T +V_{ee}| \Psi\rangle \\
& = \langle \Psi |T +V_{ee} +V_{ext}'| \Psi\rangle - \int v_{ext}'(\vect{r})\rho(\vect{r})d\vect{r},
\end{split}
\end{equation}
where $v_{ext}'(\vect{r})$ is  some other external potential.  If the ground-state wavefunction corresponding to $v_{ext}'(\vect{r})$ is $\Psi'$ then by the variational principle for the energy
\begin{equation}
\begin{split}
\langle\Psi |T +V_{ee} +V_{ext}'| \Psi\rangle &\geq \langle\Psi' |T +V_{ee} +V_{ext}'| \Psi'\rangle \\
& =  E[v_{ext}'].
\end{split}
\end{equation}
Thus, for a given density $\rho(\vect{r})$ we have
\begin{equation} \label{cojeq1}
F[\rho] \geq E[v_{ext}'] - \int v_{ext}'(\vect{r})\rho(\vect{r})d\vect{r}   
\end{equation}
Where the equality is satisfied for the true $v_{ext}(\vect{r})$  corresponding to density $\rho(\vect{r})$.  It is evdent that Eq. (\ref{cojeq1}) is equivalent to Eq. (\ref{eq5.2.2.3}).
\par From the discussion above, it is clear that in DFT, the external potential and the ground-state density are conjugate variables and the ground-state energy and the universal functional of the density are related through a Legendre transformation.  As noted earlier, the present work highlights this relationship operationally. 
\begin{figure*}
	\centering
	\vspace {1cm}
	\subfloat[\label{subfig5.5.1a}]{\includegraphics[scale = 0.95]{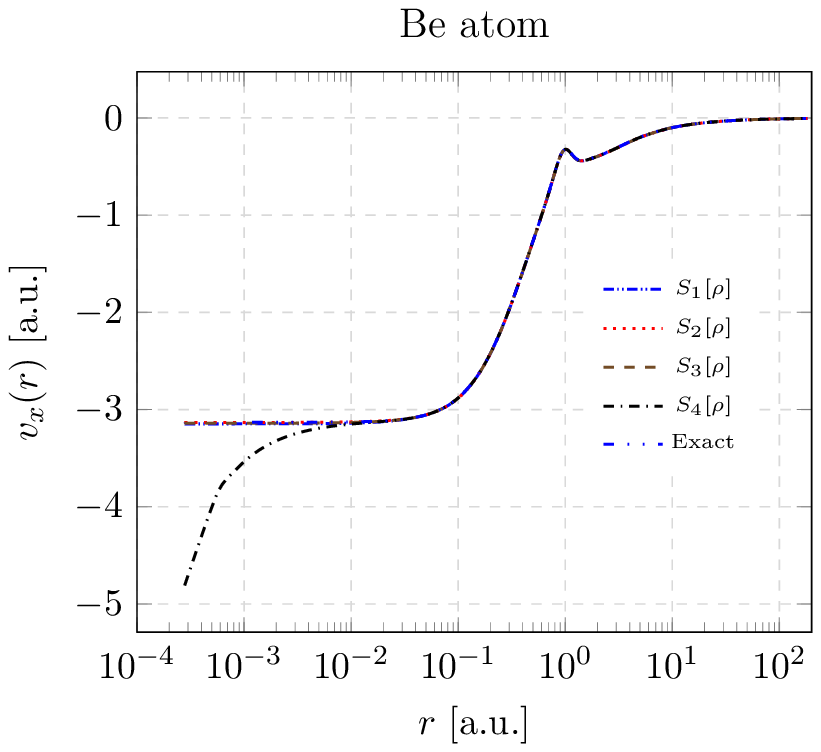}}
	\hfill%
	\subfloat[\label{subfig5.5.1b}]{ \includegraphics[scale = 0.95]{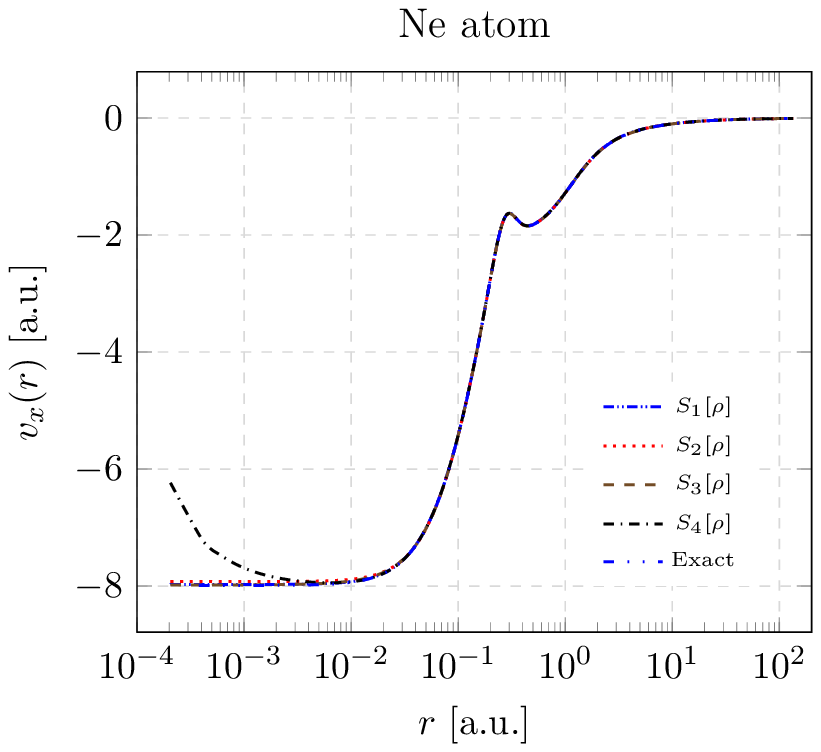}} \\
	\vspace {0.15cm}
	\subfloat[\label{subfig5.5.1c}]{ \includegraphics[scale = 0.95]{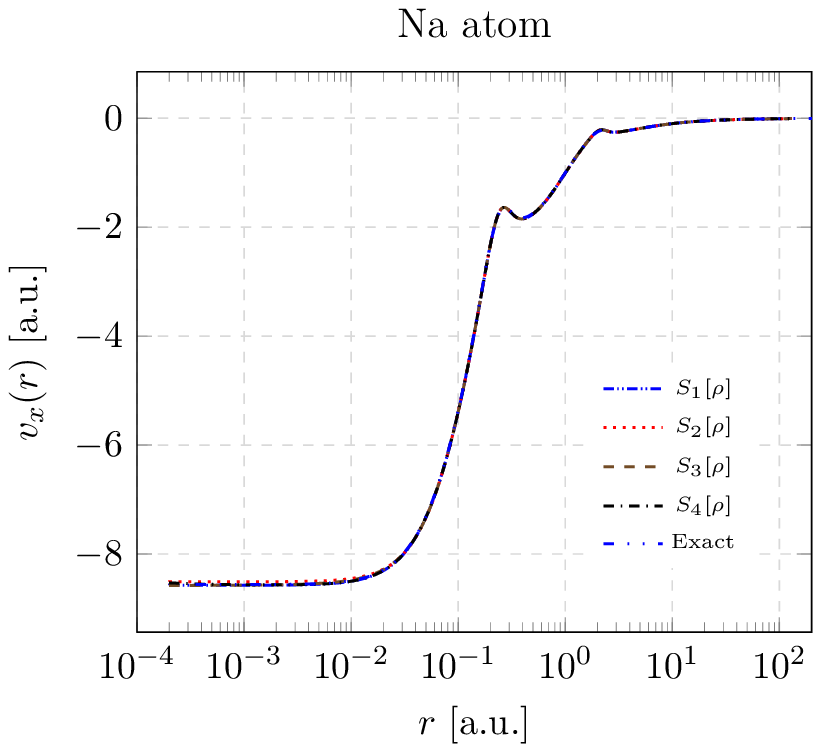}}
	\hfill
	\subfloat[\label{subfig5.5.1d}]{\includegraphics[scale = 0.95]{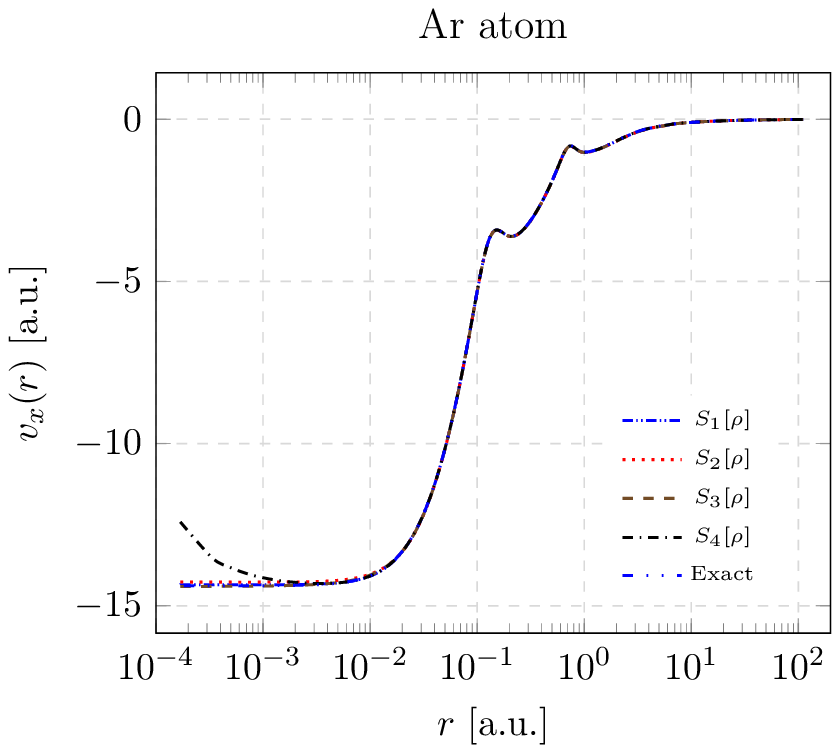}}
	\caption{ \label{Fig5.5.1} Exchange potential for  Hartree-Fock  density of Be, Ne, Na and Ar atoms using  functionals $S_1[\rho]$, $S_2[\rho]$, $S_3[\rho]$ and $S_4[\rho]$ of Eqs. (\ref{eq5.5.1}- \ref{eq5.5.4}).}
\end{figure*}
%%%%%%%%%%%%%%%%%%%%%%%%%%%%%%%%%%%%%%%%%%%%%%
%%%%%%%%%%%%%%%%%%%%%%%%%%%%%%%%%%%%%
\begin{figure*}
	\centering
	\vspace {1cm}
	\subfloat[\label{subfig5.5.2a}]{\includegraphics[scale = 0.95]{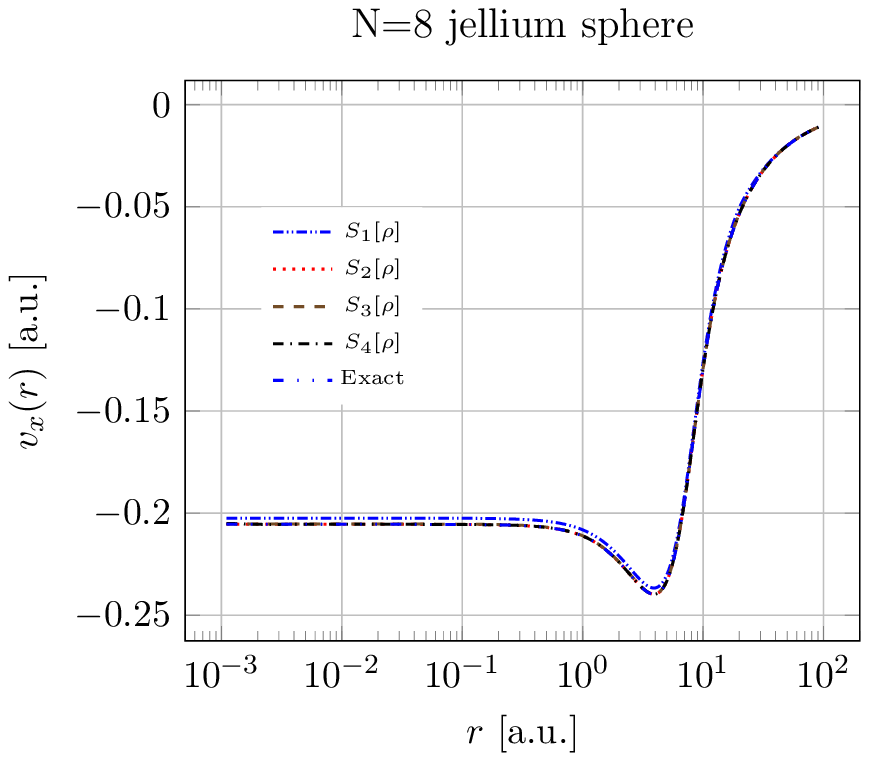}}
	\hfill%
	\subfloat[\label{subfig5.5.2b}]{ \includegraphics[scale = 0.95]{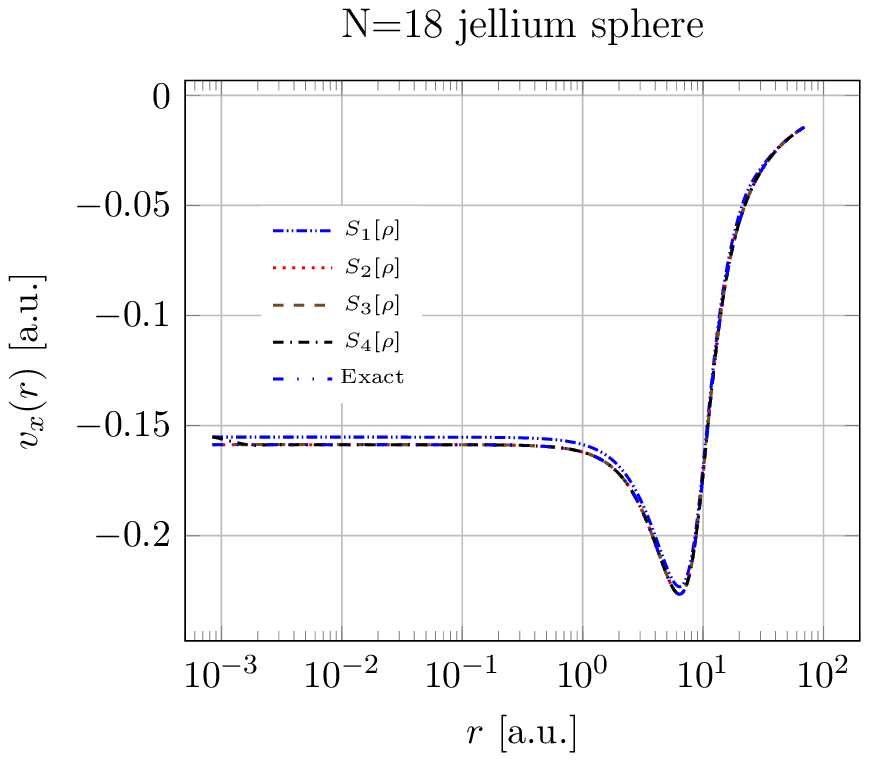}} \\
	\vspace {0.15cm}
	\subfloat[\label{subfig5.5.2c}]{ \includegraphics[scale = 0.95]{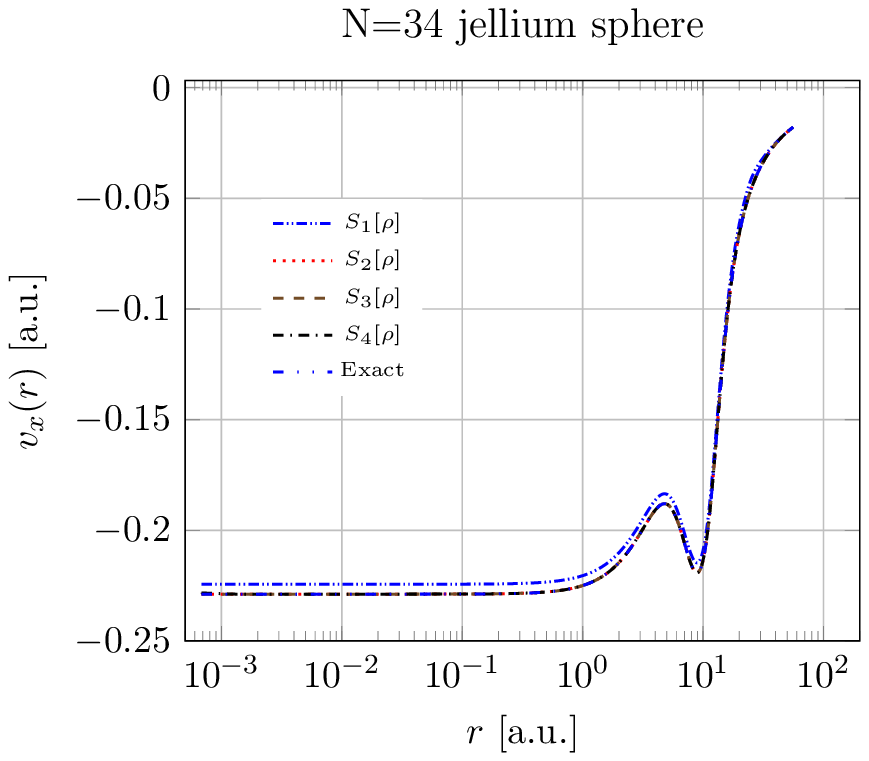}}
	\hfill
	\subfloat[\label{subfig5.5.2d}]{\includegraphics[scale = 0.95]{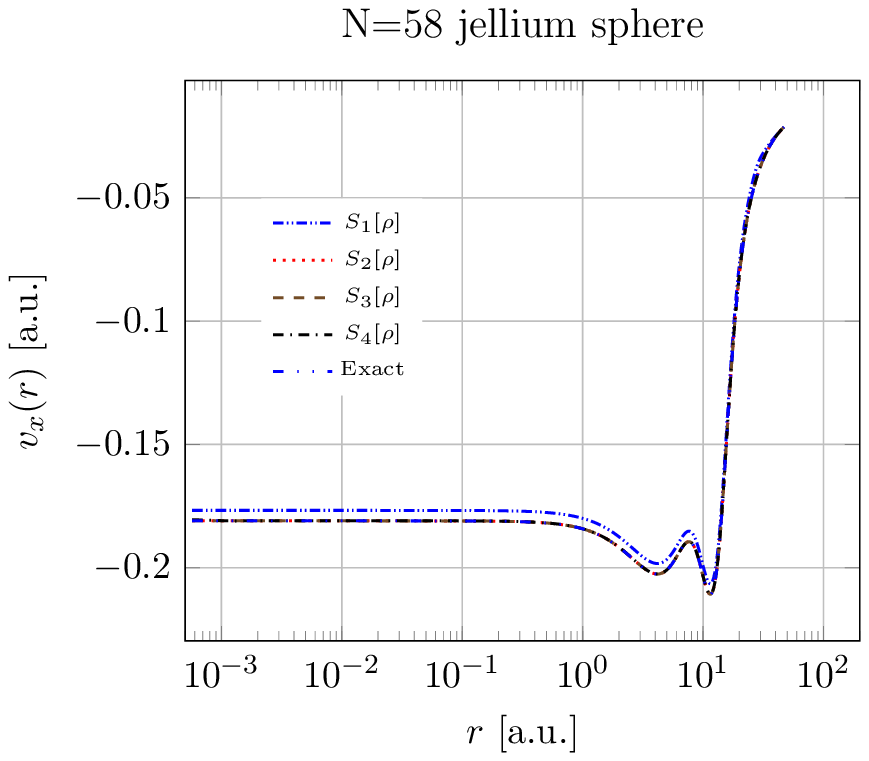}}
	\caption{ \label{Fig5.5.2} Exchange potential for Harbola-Sahni electronic densities of jellium spheres corresponding to number of atoms N=8,18,34,58, with each atom contriduting one electron, obtained by employing functionals $S_1[\rho]$,$S_2[\rho],$$S_3[\rho]$,and $S_4[\rho]$ Eqs. (\ref{eq5.5.1}- \ref{eq5.5.4}).}
\end{figure*}
%%%%%%%%%%%%%%%%%%%%%%%%%%%%%%%%%%%%%%%%%%%%%%
%%%%%%%%%%%%%%%%%%%%%%%%%%%%%%%%%%%%%%%%%%%%%
\begin{table}
	\caption{\label{tab5.5a } Results for Hartree-Fock density of Be, Ne, Na, and Ar atoms  corresponding to functionals $S_1[\rho]$,$S_2[\rho]$,$S_3[\rho]$,and $S_4[\rho]$ defined in Eqs. (\ref{eq5.5.1}- \ref{eq5.5.4}). We have listed the  $\epsilon_{max}$  eigenvalue of the highest occupied Kohn-Sham orbital and kinetic energy $T_S$. In bracket, we also have shown the eigenvalue of the highest occupied HF orbital and HF kinetic energy of every atom.  All the values are in the atomic unit.}
	\begin{ruledtabular}
		\begin{tabular}{lccc}
			& $S[\rho]$          &$\epsilon_{max}$   &$T_S $       \\ \hline
			%%%%%%%%%%%%%%%%%%%%%%%%%%%%%%%%%%%
			Be               & $S_1[\rho]$         & -0.3118       &         14.5725  \\
			& $S_2[\rho]$         & -0.3118      &14.5724      \\
			& $S_3[\rho]$   & -0.3118      & 14.5724   \\
			& $S_4[\rho]$   &  -0.3107      &14.5724    \\
			&                       & (-0.3093) & (14.5730)   \\  \\
			%%%%%%%%%%%%%%%%%%%%%%%%%%%%%%%%%%%%%%
			Ne      & $S_1[\rho]$   & -0.8451    &  128.5454    \\
			& $S_2[\rho]$    &   -0.8503 &   128.5446  \\
			& $S_3[\rho]$  &    -0.8503 &   128.5448   \\
			& $S_4[\rho]$  &   -0.8468  &   128.5453  \\
			&                       & (-0.8504) &      (128.5471)    \\ \\
			%%%%%%%%%%%%%%%%%%%%%%%%%%%%%%%%%%%%%
			Na    & $S_1[\rho]$    & -0.1822   &  161.8565   \\
			& $S_2[\rho]$    & -0.1821  &  161.8558    \\
			& $S_3[\rho]$    & -0.1821  &161.8559    \\
			& $S_4[\rho]$    & -0.1821  &   161.8565   \\
			&                         & (-0.1821)    & (161.8589)   \\ \\
			%%%%%%%%%%%%%%%%%%%%%%%%%%%%%%%%%%%%
			Ar    & $S_1[\rho]$   &-0.6066   &526.8124 \\
			& $S_2[\rho]$   &-0.6024 & 526.8110  \\
			& $S_3[\rho]$  & -0.5990  & 526.8124   \\
			& $S_4[\rho]$  & -0.5947&526.8122 \\
			&           &   (-0.5910)  & (526.8175)\\
			%%%%%%%%%%%%%%%%%%%%%%%%%%%%%%%%%%%%
		\end{tabular}
	\end{ruledtabular}
\end{table}
\begin{table}
	\caption{\label{tab5.5b}Results for the Hookium atom  and  jellium spheres with N=8, 18,34, and 58 atoms.  Density used for Hookium is the exact one and taht for jellium sphere is calculated using Harbola-Shani potential. In the bracket, we have shown the exact chemical potential and exact Kohn-Sham kinetic energy corresponding to the  input densities employed. Caption is same as used in table \ref{tab5.5a }.}
	\begin{ruledtabular}
		\begin{tabular}{lccc c}
			& $S[\rho]$          &$\epsilon_{max}$   &$T_S $       \\ \hline
			Hookium & $S_1[\rho]$      & 1.2514   &     0.6352 \\
			& $S_2[\rho]$       & 1.2497 &       0.6352 \\
			& $S_3[\rho]$       &1.2498   &      0.6352 \\
			& $S_4[\rho]$      & 1.2500  &     0.6352 \\
			&                           & (1.2500)  & ( 0.6352)    \\ \\
			%%%%%%%%%%%%%%%%%%%%%%%%%%%%%%%%%%%
			%%%%%%%%%%%%%%%%%%%%%%%%%%%%%%%%%%%
			N=8     &$S_1[\rho]$&   -0.1551 &  0.4645\\
			&$S_2[\rho]$&      -0.1582 &  0.4645\\
			&$S_3[\rho]$&             -0.1580&  0.4645\\
			&$S_4[\rho]$&         -0.1582&  0.4645\\
			&                &         (-0.1582)&  (0.4645)\\   \\
			%%%%%%%%%%%%%%%%%%%%%%%%%%%%%%%%%%%
			N=18  &$S_1[\rho]$&     -0.1393 & 1.1096\\
			&$S_2[\rho]$&         -0.1427& 1.1096\\
			&$S_3[\rho]$&          -0.1427  &  1.1096\\
			&$S_4[\rho]$&          -0.1427&  1.1096\\
			&                       & (  -0.1427) & (1.1096) \\ \\
			%%%%%%%%%%%%%%%%%%%%%%%%%%%%%%%%%%%
			N=34   &$S_1[\rho]$&     -0.1300  & 2.1476\\
			&$S_2[\rho]$&         -0.1345 & 2.1476\\
			&$S_3[\rho]$&           -0.1344 & 2.1476\\
			&$S_4[\rho]$&          -0.1344 & 2.1476\\
			&                   &         ( -0.1344)& (2.1476) \\ \\
			%%%%%%%%%%%%%%%%%%%%%%%%%%%%%%%%%%%
			N=58     &$S_1[\rho]$&      -0.1245 &3.6984\\
			&$S_2[\rho]$&         -0.1288& 3.6984\\
			&$S_3[\rho]$&          -0.1288& 3.6984\\
			&$S_4[\rho]$&             -0.1287& 3.6984\\
			&                  & ( -0.1288) & (3.6984)
			%%%%%%%%%%%%%%%%%%%%%%%%%%%%%%%%%%%
		\end{tabular}
	\end{ruledtabular}
\end{table}
%%%%%%%%%%%%%%%%%%%%%%%%%%%%%%%%%%%%%
\begin{figure}
	\centering
	\includegraphics[scale = 0.95]{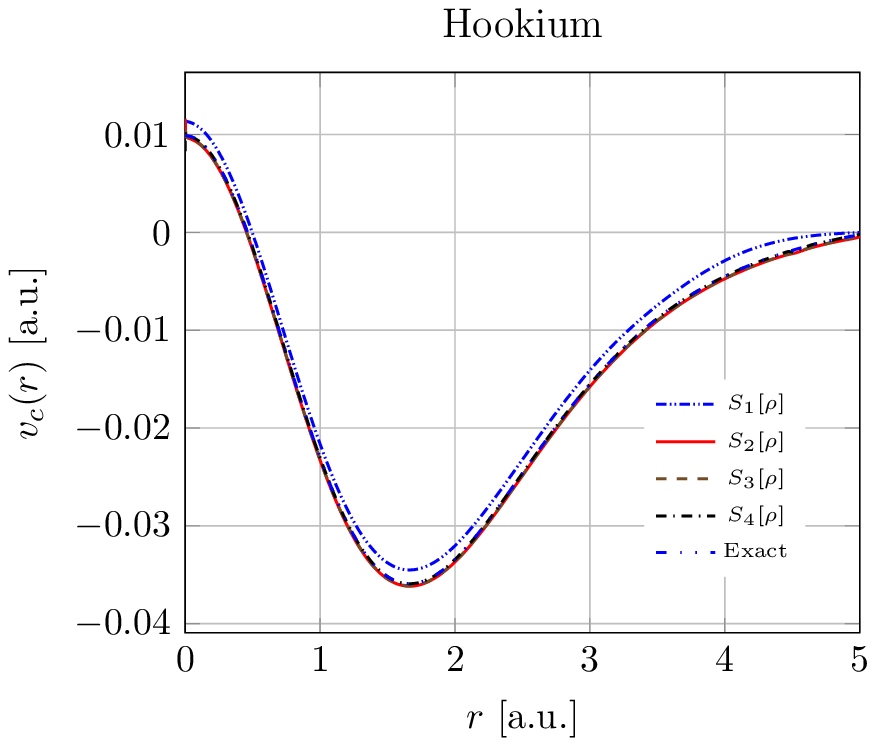}
	\caption{ \label{Fig5.5.3}Correlation  potential of Hookium calculated by employing the   functionals  $S_1[\rho]$,$S_2[\rho]$, $S_3[\rho]$,and $S_4[\rho]$ of Eqs. (\ref{eq5.5.1}- \ref{eq5.5.4}).}
\end{figure}

\section{A general penalty method for obtaining $v_{xc}(\vect{r})$}\label{sec5.4}
Although initial implementation of Eq. (\ref{eq5.2.2.5}) and subsequent work expressed $v_{xc}(\vect{r})$ as a sum of Gaussian functions, it can equally well be done  using an iterative process. For this, one  starts with an approximate $v_{xc}(\vect{r})$ and then updates it using a functional $S[\rho]$ of the dimension of energy such that
\begin{equation}\label{eq5.4.1}
v_{xc}^{i+1}(\vect{r}) = v_{xc}^{i}(\vect{r}) + \frac{\delta S[\rho]}{\delta \rho(\vect{r})}\Big|_{ \rho_i(\vect{r})} - \frac{\delta S[\rho]}{\delta \rho(\vect{r})}\Big|_{ \rho_0(\vect{r})}
\end{equation}
where $i$ indicates the  iteration cycle number. The functional derivative of $S[\rho]$ is required to satisfy the condition \cite{Kumar_2019}
\begin{equation}\label{eq5.4.2}
\int \Big( \frac{\delta S[\rho]}{\delta \rho(\vect{r})}\Big|_{ \rho_i(\vect{r})} - \frac{\delta S[\rho]}{\delta \rho(\vect{r})}\Big|_{ \rho_0(\vect{r})} \Big)(  \rho_i(\vect{r}) -  \rho_0(\vect{r}) ) d\vect{r} \ge 0,
\end{equation}
so that in each iteration the functional given by Eq. (\ref{eq5.2.2.3}) becomes larger and larger reaching ultimately the functional $F[\rho_0]$.
\par One of the choices for the functional $S[\rho]$ is
\begin{equation}\label{eq5.4.3}
S[\rho] =\frac{\epsilon}{2}\iint \frac{\rho(\vect{r})\rho(\vect{r}')}{|\vect{r}-\vect{r}'|}d\vect{r}'d\vect{r}'
\end{equation}
i. e. the Hartree energy functional, with $\epsilon$ being a small number. This gives
\begin{equation}\label{eq5.4.4}
\frac{\delta S[\rho]}{\delta \rho(\vect{r})} = \epsilon\int \frac{\rho(\vect{r}')}{|\vect{r}-\vect{r}'|}d\vect{r}'.
\end{equation}
Thus for this choice of $S[\rho]$, the Hartree potential  updates the exchange-correlation potential in each cycle as given by Eq. (\ref{eq5.4.1}). The condition of convergence satisfied by this $S[\rho]$, as given by  Eq. (\ref{eq5.4.2}), is
\begin{equation}\label{eq5.4.5}
\epsilon \iint \frac{ \{\rho (\vect{r})  -\rho_0(\vect{r}) \}\{\rho (\vect{r}') -\rho_0(\vect{r}')\}}{|\vect{r}-\vect{r}'|}d\vect{r}d\vect{r}' \ge 0.
\end{equation}
Notice that the integral in the equation above is precisely the same as the  penalty functional employed in Eq. (\ref{eq5.2.1.1}) for the implementation of Levy's constrained-search. This suggests that not only the functional given by Eq. (\ref{eq5.2.1.1}) or by Eq. (\ref{eq5.4.5}) but a general functional
\begin{equation}
\int \Big( \frac{\delta S[\rho]}{\delta \rho(\vect{r})} - \frac{\delta S[\rho]}{\delta \rho(\vect{r})}\Big|_{\rho_0(\vect{r})} \Big)(  \rho(\vect{r}) -  \rho_0(\vect{r})) d\vect{r},
\end{equation}
 where $S[\rho]$ is a functional satisfying Eq. (\ref{eq5.4.2}), can be used as a penalty functional for performing Levy's constrained-search. The $S[\rho]$ functionals are precisely those that are employed in updating the exchange-correlation potential iteratively using Eq. (\ref{eq5.4.1}) to find the Kohn-Sham potential for a given density. Then the  equation to be solved for obtaining the exchange-correlation potential is
\begin{equation}\label{eq5.4.6}
\Big[ -\frac{1}{2} \nabla^2 + v_{ext}(\vect{r}) +v_H(\vect{r}) + v^{\lambda}[\rho,\rho_0](\vect{{r}}) \Big ]\phi_i(\vect{r}) = \epsilon_i\phi_i(\vect{r}),
\end{equation}
where
\begin{equation}\label{eq5.4.7}
\begin{split}
&v^{\lambda}[\rho,\rho_0](\vect{{r}}) \\
 &= \lambda \frac{\delta }{\delta \rho(\vect{r})}\Big[\int \Big( \frac{\delta S[\rho]}{\delta \rho(\vect{r}')} - \frac{\delta S[\rho]}{\delta \rho(\vect{r}')}\Big|_{\rho_0(\vect{r}')} \Big)(  \rho(\vect{r}') -  \rho_0(\vect{r}'))d\vect{r}' \Big] \\
&= \lambda \Big[   \frac{\delta S[\rho]}{\delta \rho(\vect{r})} - \frac{\delta S[\rho]}{\delta \rho(\vect{r})}\Big|_{ \rho_0(\vect{r})} \\
& ~~~~~~~~~~~~~~~~~~~+ \int \frac{\delta^2 S[\rho]}{\delta \rho(\vect{r})\delta \rho(\vect{r}')}(\rho(\vect{r}') -  \rho_0(\vect{r}'))d \vect{r}'\Big]
\end{split}
\end{equation}
and the exchange-correlation potential is given as
\begin{equation}\label{eq5.4.8}
v_{xc}(\vect{r})
=\lim_{\lambda \to ~\infty} v^{\lambda}[\rho,\rho_0](\vect{{r}}) .
\end{equation}
With the prescription given in Eq. (\ref{eq5.4.8}),  we now have a general penalty method where a  functional $S[\rho]$ is applied in Levy's constrained minimization method. This functional is the same as  that used for maximization of functional $J_{\rho_0}[v] \cite{Kumar_2019}$.
\par The presentation above brings  to  fore the complimentary nature, as discussed in section \ref{subsec_cojugate}, of the the two ways of obtaining universal functional $F[\rho]$ and unifies them through the functional $S[\rho]$. On the operational side it connects the two method through the functional $S[\rho]$ and extends the Zhao-Morisson-Parr method to a general penalty method to obtain the exchange-correlation potential for a given density.

\section{Results}\label{sec5.5}
We now perform the general penalty method calculations as described in the previous section  using the functionals \cite{Kumar_2019}
\begin{equation} \label{eq5.5.1}
S_1[\rho] = \frac{1}{2}\iint \frac{\rho(\vect{r})\rho(\vect{r}')}{|\vect{r}-\vect{r}'|}d\vect{r}'d\vect{r}' ,
\end{equation}
\begin{equation}\label{eq5.5.2}
S_2[\rho] = \int \rho(\vect{r}) \log(\rho(\vect{r}))d\vect{r} ,
\end{equation}
\begin{equation}\label{eq5.5.3}
S_3[\rho] = \frac{1}{(n+1)}\int \rho^{(n+1)}(\vect{r})d\vect{r}, ~ ( n>0) ,
\end{equation}
and
\begin{equation}\label{eq5.5.4}
S_4[\rho] =  - \frac{1}{2}\int \rho^{1/2}(\vect{r})\nabla^2\rho^{1/2}(\vect{r})d\vect{r}.
\end{equation}
 For  $S_3[\rho]$ we have taken $n$ to be $0.05$. We have calculated the exchange-correlation potential for   electronic  densities of atoms,  jellium spheres and the Hookium atom.  For the atomic systems we have used Hartree-Fock density \cite{Bunge_1993} of Be, Ne, Na, and Ar atoms. For jellium spheres \cite{Knight_PRL.52.2141,Matthias_RMP.65.677} we have employed their electronic densities  corresponding to number of atoms N=8,18,34,58, with each atom contributing one electron. These densities are obtained using the Harbola-Sahni quantal-DFT  method \cite{MKH_89,SAHNI_BOOK}. In case of the Hookium atom exact density \cite{Laufer_1986,Kais_1993} is  employed . The systems considered here have different external potentials. The external potential is proportional to $-\frac{1}{r}$ and  $r^2$, respectively,  for  atoms and the Hookium. For  jellium spheres it depends on $r^2$ inside the  sphere and goes as $-\frac{1}{r}$ outside the  sphere. Here $r$ is the distance from the nucleus or from the cener of the Hookium and the jellium spheres.
\par  To obtain the exchange-correlation potential using Eq. (\ref{eq5.4.8}), one usually solves the Kohn-Sham Eq. (\ref{eq5.4.6}) self-consistently for a series of values of $\lambda$ and obtain the corresponding exchange-correlation potentials. Next, these exchange-correlation potentials are employed to get the exact  exchange-correlation potential through some extrapolation technique. However,  we use an alternate iterative approach as suggested in ref \cite{morrison_1995}. We start with a small value  $\lambda$ say $\lambda_j$ $(j =0)$ and  self-consistently solve  Eq. (\ref{eq5.4.6}) for the potential
\begin{equation}  \label{eq5.5.5}
v_{KS}(\vect{r})= v_{fixed}(\vect{r})  + v^{\lambda_j}[\rho_0,\rho](\vect{r}).
\end{equation}
 Here
\begin{equation}\label{eq5.5.6}
v_{fixed}(\vect{r}) = v_{ext}(\vect{r})+(1-\frac{1}{N})v_H[\rho_0](\vect{r})
\end{equation}
is part of the  potential that is known exactly and is kept fixed, and
\begin{small}
\begin{equation}\label{eq5.5.7}
\begin{split}
v^{\lambda_j}[\rho_0,\rho](\vect{r})= \lambda_j  \Big[   &\frac{\delta S[\rho]}{\delta \rho(\vect{r})} - \frac{\delta S[\rho]}{\delta \rho(\vect{r})}\Big|_{ \rho_0(\vect{r})} \\
& + \int \frac{\delta^2 S[\rho]}{\delta \rho(\vect{r})\delta \rho(\vect{r}')}(\rho(\vect{r}') -  \rho_0(\vect{r}'))d \vect{r}'\Big]
\end{split}
\end{equation}
\end{small}
is the update term during the self-consistent process. In Eq. (\ref{eq5.5.6}), N is the total numbers of electrons. The Hartree term is multiplied by the factor $(1-\frac{1}{N})$ to make Fermi-Amaldi correction \cite{fermi1934orbite} so that the self-interaction component of the exchange potential is accounted for exactly; This facilitates obtaining the exchange-correlation potential by making the self-consistent (SCF) calculation less demanding. Having obtained the self-consistent  solution for $\lambda_j$, next we increase the value of $\lambda$ from $\lambda_j$ to $\lambda_{j+1}$ and again solve the Eq. (\ref{eq5.5.5}) incorporating the SCF potential $ v^{\lambda_j}[\rho_0,\rho](\vect{r})$ with the exactly known part $v_{fixed}(\vect{r})$, and using this as the starting potential for $\lambda _{j+1}$. This process is iterated until the quantity
$ \Delta_{\rho}= \int |\rho_{0}(\vect{r})-\rho(\vect{r})| d\vect{r} $ becomes smaller than some chosen value $\delta_{itr}$. In our calculations we have taken $\delta_{itr}$  to be $1 \times 10^{-5}$ for atoms and   $ 1 \times 10^{-6}$ for jellium spheres and Hookium atom. During the self-consistent cycle a linear mixing of density has been employed for the calculation of potential. Since the systems we consider are spherically symetric, in the following we write all relevant quantities as a function of radial coordinate $r$.  To check the convergence of  self-consistent process we have taken
\begin{equation}
max\Big|r[ v^{(i)}_{KS}({r})-v^{(i+1)}_{KS}({r})]\Big| \le 1 \times 10^{-8},
\end{equation}
  where $v^{(i)}_{KS}({r})$ and $v^{(i+1)}_{KS}({r})$ are the  potentials of two consecutive cycles.  For all the calculations  we have used a modified Hermann-Skillman code \cite{HSK}.  We have also fixed the potential $v_{KS}(\vect{r})$   to   $ -\frac{1}{r}$ \cite{UvBarth_1985} at a suitably chosen large distance $r$ in     the  asymptotic region.

\par The exchange-correlation potentials   obtained  for different systems by employing  functionals $S[\rho]$ of Eqs. (\ref{eq5.5.1}-\ref{eq5.5.4}) are shown in the Fig (\ref{Fig5.5.1}) ,Fig (\ref{Fig5.5.2}), and Fig (\ref{Fig5.5.3}). In Fig (\ref{Fig5.5.1}) we display our results for atoms. Since we have used the Hartree-Fock density of atoms, potentials obatained by us on inversion of this density should be very close to the exact exchange potentials for these systems. Therefore, we compare our results with corresponding exact exchange potentials given by the optimized potential method  (OPM) \cite{PhysRev.90.317,Talman_1978,Engel_1993}.   It is evident from the figure that for the functionals $S_1[\rho]$, $S_2[\rho]$ and $S_3[\rho]$ the  output exchange potential  matches perfectly with the  corresponding exact result. The resulting potentials obtained by functional $S_4[\rho]$ are also  on the top of  exact result except  in the regions very  close to the nucleus.
\par Next in  Fig (\ref{Fig5.5.2}), we have plotted the exchange  potential for jellium spheres that is calculated by inverting the density obatined from the Kohn-Sham calculations employing the Harbola-Shani exchange potential. The potentials calculated  by $S_2[\rho]$, $S_3[\rho]$ and $S_4[\rho]$ are on the top of  the corresponding exact Harbola-Sahni potentials.  On the other hand, for the functional $S_1[\rho]$ potentials   have a small constant shift from the exact results even when its asymptotic values is fixed to $-\frac{1}{r}$. Finally in Fig (\ref{Fig5.5.3}), we have displayed the correlation potential of the Hookium atom along with the exact correlation potential. For this we have used the exact density for the Hookium atom with $\omega  = 0.5$. Again the potentials calculated  by $S_2[\rho]$, $S_3[\rho]$ and $S_4[\rho]$  match with the exact result and the potentials corresponding to functional $S_1[\rho]$  shows a small constant shift.
\par The eigenvalues $\epsilon_{max}$ of  highest occupied  Kohn-Sham orbital    and the Kohn-Sham kinetic energies $T_S[\rho]$ corresponding to functionals $S_1[\rho]$, $S_2[\rho]$, $S_3[\rho]$, and $S_4[\rho]$ are displayed in  Table (\ref{tab5.5a })  for the atoms and in Table (\ref{tab5.5b}) for the Hookium atom and jellium spheres.   It is evident from Table (\ref{tab5.5a }) that for every atom and  for each functional employed,  $\epsilon_{max}$ are close to each other and also close to the  eigenvalue of highest occupied Hartree-Fock orbital. Similarly  $T_S[\rho]$ obtained by different functionals are close to each other and smaller than the corresponding   Hartree-Fock kinetic energy. Furthermore, for the Hookium atom and jellium spheres  Table (\ref{tab5.5b})  shows that  for every functional employed by us,  calculated values of  $T_S[\rho]$  match with  the corresponding exact results. Similarly  $\epsilon_{max}$ also match with the exact results for the functionals $S_2[\rho]$, $S_3[\rho]$ $S_4[\rho]$. However, in the case of functional $S_1[\rho]$ the value of $\epsilon_{max}$ is different from the exact one. This difference is the same as the shift in the potential for $S_1[\rho]$.
\par We point out that application of the functional $S_2[\rho]$ to jellium spheres and the Hookium atom  causes some problem in obtaining the SCF solution of Kohn-Sham equation. This problem  arises because the densities of jellium spheres and the Hookium atom are very small all over the space.  Thus  the  potential calculated by  $S_2[\rho]$ during the self-consistent cycle becomes very large and leads to difficulty in solving the corresponding Kohn-Sham equation.  To overcome this  difficulty, for first few values of $ \lambda$  we mixed both the   potential and the  density  in the SCF calculation.  For a given value of $\lambda_{j}$, take a trial input desity $\rho^{(i)}_{in}(\vect{r})$ and use it in Eq. (\ref{eq5.5.5}) to calculate the potential 
\begin{equation*}
v^{(i)}_{KS}(\vect{r}) = v_{fixed}(\vect{r})  + v^{\lambda_j}[\rho_0,\rho^{(i)}_{in}(\vect{r})](\vect{r}).
\end{equation*}
Then solve the KS equation (Eq. (\ref{eq5.1.1})) using the potential $v^{(i)}_{KS}(\vect{r})$ and obtain output density $\rho^{(i)}_{out}(\vect{r})$. Then potential for next iteration is  taken as 
\begin{equation*}
(1-\eta)v^{(i)}_{KS} (\vect{r}) + \eta v^{(i+1)}_{KS} (\vect{r}),
\end{equation*}
where the potential $v^{(i+1)}_{KS} (\vect{r})$ is calculated by employing the input density
\begin{equation*}
\rho^{(i+1)}_{in}(\vect{r}) = (1-\epsilon)\rho^{(i)}_{in}(\vect{r}) + \epsilon\rho^{(i)}_{out}(\vect{r}).
\end{equation*}
Here $\epsilon$ and $\eta$ are  mixing parameter.  In our calculations the potential mixing given in equation above is used for the $\lambda$ up to $100$ with  $\eta =0.01$.  Furthermore, we also note that, while using functional $S_4[\rho]$ for  Be and Ar,  $\Delta_{\rho}$ achieves minimum value of  $5 \times 10^{-5}$. Similarly for the functional $S_1[\rho]$ a minimum value of  $2 \times 10^{-6}$ be could achieved in the case of  jellium spheres having  34 and 54 atoms.
\section{Conclusion}
To conclude, in the present work  we have  derived a general  penalty method for  Levy's constrained-search for the universal functional of DFT. This gives the  Kohn-Sham potential and  Kohn-Sham kinetic energy for a given density using several different functionals $S[\rho]$. These functionals are the same as those used in density-to-potential inversion through Lieb's formulation. This brings forth the complementary nature of  Levy's and Lieb's  definition for universal functionals for a given density and  enables us to  generalize the ZMP method using the functional $S[\rho]$.
\par Utility of the present work along that of ref. \cite{Kumar_2019} lies in it giving several methods for obtaining the exchange-correlation potential for a ground-state density $\rho_0(\vect{r})$. Depending on the system  and corresponding density, one can thus choose an appropriate functional for the carrying out these calculation. 
\begin{acknowledgments}
We are grateful to  Prof. Dr. Eberhard Engel for providing optimized effective potential data of atoms.
\end{acknowledgments}
\appendix
\section{Penalty method approach for Levy's constrained search} \label{sec5a1}
Employing Levy's constrained search, one obtains the universal functional $F[\rho]$ of density functional theory through constrained minimization as given by Eq. (\ref{eq5.2.2}). In the context of Kohn-Sham system, this constraint has been enforced using a penalty functional of the form given by Eq. (\ref{eq5.2.1.1}). In this section, we present general theorems of  penalty method for Levy's constrained minimization. These are based on the discussion in ref. \cite{gunaratne2006penalty}.
\par Consider the functional $F[\Psi] =\langle \Psi |T +V_{ee}| \Psi\rangle$ of wavefunction $\Psi$   to be minimized with the constraint that the density  corresponding to $\Psi$ is equal to the given ground-state density $\rho_0(\vect{r})$. note that $F[\Psi] \geq 0 $. Let  $F[\rho_0] $ be the minimum value this functional. Then in  Levy's constrained-search method \cite{Levy_1979} this is obtained by
\begin{equation} \label{apn_levy}
F[\rho_0] = \underset{\Psi \to \rho_0(\vect{r})}{min} F[\Psi].
\end{equation}
In writing Eq. (\ref{apn_levy}), it is assumed  \cite{Lieb_1983} that $F[\Psi]$ is continuous and has a minimum. To approach this problem through penalty method we introduce a  functional of $\Psi$
\begin{equation}\label{eq5a1.1}
F_P[\Psi,\lambda] = F[\Psi] +\lambda P[\rho,\rho_0],
\end{equation}
where $\rho(\vect{r})$ is the density corresponding to $\Psi$ and  $P[\rho, \rho_0]$  is  a penalty functional.  It is chosen such that  $P[\rho, \rho_0] \ge 0$ where equality is satisfied for $\rho (\vect{r}) = \rho_0(\vect{r})$. The quantity $\lambda > 0$,  is known as the penalty parameter.  Then   minimizing $F[\Psi]$  with density constraint is  equivalent to $\lim\limits_{\lambda \to \infty} {min}~F_P[\Psi,\lambda]$. Thus constrained minimization is mapped to an unconstrained minimization. In this minimization, as $\lambda \to \infty$,  $\rho(\vect{r})$ corresponding to  $\Psi$ approaches $\rho_0(\vect{r})$ and the  functional $F[\Psi]$  obtains the minimum value corresponding to $\rho_0(\vect{r})$. This is shown in the following.
\par Let the $\lambda_k > 0,~ k= 1,2,....,\infty$  be a sequence of penalty parameters such that $\lambda_{k+1} >\lambda_k$ and $\lim\limits_{\lambda \to \infty} \lambda_{k} = \infty$ and  let $\Psi_k$ be the minimizing function  for the functional $F_P[\Psi,\lambda_k]$; We assume that the sequence $\Psi_k$ is convergent. Furthermore, let $\Psi^*$ be the  minimizing function of  the universal functional $F[\Psi]$ under the given constraint that $\rho(\vect{r}) = \rho_0(\vect{r})$ where $\rho(\vect{r})$ is the density for  $\Psi^*$. Then the  following relations hold:\\
\begin{rel} \label{eq5a1.2}
$F_P[\Psi_{k+1},\lambda_{k+1}] \ge F_P[\Psi_{k},\lambda_{k}]$
\end{rel}
\begin{proof} $\Psi_{k+1}$ and  $\Psi_{k}$ are the solutions of functional $F_P[\Psi,\lambda]$ for $\lambda$ equal to  $\lambda_{k+1}$ and $\lambda_{k}$, respectively, where $\lambda_{k+1} > \lambda_{k}$. Let $\rho_{k+1}(\vect{r})$ and $\rho_{k}(\vect{r})$ be the densities calculated from wavefunctions $\Psi_{k+1}$ and $\Psi_{k}$, respectively. Then we have
\begin{equation}
\begin{split}
F_P[\Psi_{k+1},\lambda_{k+1}] &= F[\Psi_{k+1}] +\lambda_{k+1}P[\rho_{k+1},\rho_0]\\
&>  F[\Psi_{k+1}] +\lambda_{k}P[\rho_{k+1},\rho_0] ~ ( \text{becuase~}  \lambda_{k+1} >\lambda_k)\\
&    \ge  F_P[\Psi_{k},\lambda_{k}]
\end{split}
\end{equation}
\end{proof}
The last inequality above follows because $F_P[\Psi_{k},\lambda_{k}]$ has minimum value for   $\lambda_{k}$. Thus as $k$ increases, value of the corresponding functional $F_P[\Psi_{k},\lambda_{k}]$ also increases.
\begin{rel} \label{eq5a1.3}
Now  we show that $P[\rho_{k+1},\rho_0] \le P[\rho_k,\rho_0].$
\end{rel}
\begin{proof} Start with
\begin{equation}\label{eq5a1.4}
F_P[\Psi_{k+1},\lambda_{k+1}] \le  F_P[\Psi_{k},\lambda_{k+1}]
\end{equation}
and
\begin{equation}\label{eq5a1.5}
F_P[\Psi_{k+1},\lambda_{k}] \ge  F_P[\Psi_{k},\lambda_{k}].
\end{equation}
Now by subtracting   Eq. (\ref{eq5a1.5})  from Eq. (\ref{eq5a1.4}) we get
\begin{equation}
\begin{split}
&F_P[\Psi_{k+1},\lambda_{k+1}] - F_P[\Psi_{k+1},\lambda_{k}]\le    F_P[\Psi_{k},\lambda_{k+1}] -F_P[\Psi_{k},\lambda_{k}] \\
&    \implies    P[\rho_{k+1},\rho_0]  \le P[\rho_{k},\rho_0]
\end{split}
\end{equation}
\end{proof}
\begin{rel} \label{eq5a1.6}
Using the two  relations above, we now show that the functional $F[\Psi]$ too increases as  $k$ increases in the sequence i.e.
$F[\Psi_{k+1}] \ge F[\Psi_{k}] $
\end{rel}
    \begin{proof}  From Relation (\ref{eq5a1.2}) and  Relation (\ref{eq5a1.3}) we have
\begin{equation}\label{eq5a1.6a}
F_P[\Psi_{k+1},\lambda_{k}] \ge  F_P[\Psi_{k},\lambda_{k}]
\end{equation}
and
\begin{equation}
\lambda_k P[\rho_{k+1},\rho_0]  \le  \lambda_k P[\rho_{k},\rho_0]
\end{equation}
or
\begin{equation}\label{eq5a1.6b}
-\lambda_k P[\rho_{k+1},\rho_0] \ge  -\lambda_k P[\rho_{k},\rho_0]
\end{equation}
Adding the   Eq. (\ref{eq5a1.6a})  and Eq. (\ref{eq5a1.6b}) gives
\begin{equation}
F_P[\Psi_{k+1},\lambda_{k}]- \lambda_k P[\rho_{k+1},\rho_0]\ge  F_P[\Psi_{k},\lambda_{k}]
 -\lambda_k P[\rho_{k},\rho_0].
\end{equation}
$    \implies $ $F[\Psi_{k+1}] \ge   F[\Psi_{k}] $.
\end{proof}
Notice that now we have shown that as the penalty parameter $\lambda$ increases with $k$, the functional $F[\Psi_k]$  keeps becoming larger and larger. On the other hand, the penalty functional $P[\rho,\rho_0]$ is not yet zero. However, as shown in the theorem below, $P[\rho,\rho_0] \to 0$ when $\lambda \to \infty$.
\begin{rel} \label{lem4}
 $  F[\Psi_{k}] \le F[\Psi^*] $
\end{rel}
\begin{proof} Note that  $\Psi^*$  satisfies the density constraint. Thus penalty functional  $P[\rho,\rho_0] = 0 $ for the density corresponding to $\Psi^*$ . It follows that
\begin{equation}
\begin{split}
F[\Psi_k] &\le F[\Psi_k] + \lambda_k P[\rho_k,\rho_0]  ~ (\text {because ~} P[\rho_k,\rho_0] \ge 0)\\
             &= F_P[\Psi_k, \lambda_k]  \\
                & \le  F_P[\Psi^*, \lambda_k] =F[\Psi^*].
\end{split}
\end{equation}
\end{proof}
All the relations above lead to the final result that is given in the theorem below.
\begin{theorem}
Let the $\{\Psi_k\}, ~ k= 1,2,....,\infty $ be a convergent sequence of functions that minimize $F_P[\Psi,\lambda_{k}]$ with  $\lambda_{k+1} >\lambda_k$. Then the  limit $\bar{\Psi}$ of the set $\{\Psi_k\}$ gives the  minimum of $F[\Psi]$ satisfying the  constraint $P[\rho, \rho_0] = 0$.
\end{theorem}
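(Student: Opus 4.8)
The plan is to package the four Relations already proved with two short limiting arguments: one forcing the penalty functional to vanish along the sequence, and one invoking the definition of $\Psi^*$ to identify the limiting value of $F$. First, by Relation (\ref{eq5a1.6}) the sequence $\{F[\Psi_k]\}$ is non-decreasing, and by Relation (\ref{lem4}) it is bounded above by $F[\Psi^*]$; hence it converges to some $L$ with $L\le F[\Psi^*]$. Since $F[\Psi]\ge 0$ and, exactly as in the proof of Relation (\ref{lem4}), $F_P[\Psi_k,\lambda_k]=F[\Psi_k]+\lambda_k P[\rho_k,\rho_0]\le F_P[\Psi^*,\lambda_k]=F[\Psi^*]$, we obtain $0\le \lambda_k P[\rho_k,\rho_0]\le F[\Psi^*]-F[\Psi_k]\le F[\Psi^*]$. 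Because $\lambda_k\to\infty$, this squeezes $P[\rho_k,\rho_0]\to 0$.

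Next I would pass to the limit along the assumed convergent sequence $\Psi_k\to\bar{\Psi}$, letting $\bar{\rho}(\vect{r})$ denote the density associated with $\bar{\Psi}$. Using continuity of the maps $\Psi\mapsto F[\Psi]$, $\Psi\mapsto\rho$, and $(\rho,\rho_0)\mapsto P[\rho,\rho_0]$ — the same continuity hypotheses on $F$ already invoked in writing Eq. (\ref{apn_levy}) — one concludes $F[\bar{\Psi}]=L$ and $P[\bar{\rho},\rho_0]=\lim_k P[\rho_k,\rho_0]=0$. Thus $\bar{\Psi}$ is admissible for the constrained problem.

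Finally, admissibility of $\bar{\Psi}$ together with the defining property of $\Psi^*$ (equivalently Eq. (\ref{apn_levy})) gives $F[\bar{\Psi}]\ge F[\Psi^*]=F[\rho_0]$; combined with $F[\bar{\Psi}]=L\le F[\Psi^*]$ this forces $F[\bar{\Psi}]=F[\Psi^*]=F[\rho_0]$, so $\bar{\Psi}$ realises the constrained minimum, which is the assertion of the theorem.

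The step I expect to be the main obstacle is the limit passage: it needs the convergence of $\{\Psi_k\}$ to hold in a topology compatible with continuity of $F$ and of the penalty functional, which is why the hypothesis explicitly assumes $\{\Psi_k\}$ convergent and the discussion after Eq. (\ref{apn_levy}) assumes $F[\Psi]$ continuous. If only lower semicontinuity of $F$ is available, one instead gets $F[\bar{\Psi}]\le L$, and then admissibility of $\bar{\Psi}$ still yields $F[\bar{\Psi}]\ge F[\Psi^*]\ge L$, so the chain closes and the conclusion is unchanged; hence the argument is robust to weakening the continuity assumption.
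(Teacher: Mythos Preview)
Your proof is correct and follows essentially the same route as the paper: bound $F_P[\Psi_k,\lambda_k]$ above by $F[\Psi^*]$ via Relation~\ref{lem4}, squeeze $P[\rho_k,\rho_0]\to 0$ from $\lambda_k\to\infty$, pass to the limit using continuity, and close the chain with the optimality of $\Psi^*$. Your presentation is in fact more careful than the paper's, since you make the continuity hypotheses explicit and add the robustness remark about lower semicontinuity; the paper simply writes $F[\bar{\Psi}]=\lim_k F[\Psi_k]$ without comment.
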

\begin{proof} It is given that $\Psi^*$ gives minimum of  functional $F[\Psi]$ under the given constraint. By applying  Relation (\ref{lem4})  for $k \to \infty$ we have
\begin{equation}
F[\bar{\Psi}] = \lim\limits_{k \to \infty}F[\Psi_k] \le  \lim\limits_{k \to \infty}F_P[\Psi_k, \lambda_k]  \le  F[\Psi^*],
\end{equation}
where the last inequality above follows from the proof of Relation (\ref{lem4}).  This shows that $\lim\limits_{k \to \infty}F_P[\Psi_k, \lambda_k]$ is bounded.  Hence
$\lim\limits_{k \to \infty} \lambda_{k} P[\rho_{k}, \rho_0]$ is finite. Since for $k \to \infty$ the penalty parameter $\lambda_k \to \infty$, it follows that $\lim\limits_{k \to \infty} P[\rho_{k}, \rho_0]\to 0 $. The inequality above thus shows that : (i) $P[\rho_{k}, \rho_0] = 0$ for $k \to \infty$ ; (ii) and therefore $F[\bar{\Psi}]$ is the  optimal solution of  $F[\Psi]$ with the given constraint. Thus we have  $F[\bar{\Psi}] \le F[\Psi^*]$. Now by assumption $\Psi^*$ is the solution for the minimum of $F[\Psi]$ so  $F[\bar{\Psi}] \ge F[\Psi^*]$. Therefore  $F[\bar{\Psi}] = F[\Psi^*]$ and we conclude that limit $\bar{\Psi}$ of set $\{\Psi_k\}$ is  the  minimum of $F[\Psi]$ with $\lim\limits_{k \to \infty} P[\rho_{k}, \rho_0] = 0$.
\end{proof}
\bibliographystyle{apsrev4-1}
\bibliography{ak_thesis.bib}
\end{document}